\definecolor{mycolor1}{rgb}{0.105882,0.619608,0.466667}
\definecolor{mycolor2}{rgb}{0.85098,0.372549,0.00784314}
\definecolor{mycolor3}{rgb}{0.458824,0.439216,0.701961}
\definecolor{mycolor4}{rgb}{0.905882,0.160784,0.541176}
\definecolor{mycolor5}{rgb}{0.4,0.65098,0.117647}
\definecolor{mycolor6}{rgb}{0.65098,0.462745,0.113725}
\definecolor{mycolor7}{rgb}{0.901961,0.670588,0.00784314}
\definecolor{mycolor8}{rgb}{0.4,0.4,0.4}
\definecolor{mycolor9}{rgb}{0.301961,0,0.294118}
\definecolor{mycolor10}{rgb}{0.0313725,0.25098,0.505882}
\newif\ifmygrid@coordinates
\tikzset{/mygrid/step line/.style={line width=0.80pt,draw=gray!80},
         /mygrid/steplet line/.style={line width=0.25pt,draw=gray!80}}
\def\mygrid@def@coordinates(#1,#2)(#3,#4){%
    \def\mygrid@xlo{#1}%
    \def\mygrid@xhi{#3}%
    \def\mygrid@ylo{#2}%
    \def\mygrid@yhi{#4}%
}
\newcommand\DrawGrid[3][]{%
    \pgfkeys{/mygrid/.cd,coordinates=true,step=1,steplet=0.2,#1}%
    \draw[/mygrid/steplet line] #2 grid[step=\mygrid@steplet] #3;
    \draw[/mygrid/step line] #2 grid[step=\mygrid@step] #3;
    \mygrid@def@coordinates#2#3%
    \ifmygrid@coordinates%
        \draw[/mygrid/step line]
        \foreach \xpos in {\mygrid@xlo,...,\mygrid@xhi} {%
          (\xpos,\mygrid@ylo) -- ++(0,-3pt)
                              node[anchor=north] {$\xpos$}
        }
        \foreach \ypos in {\mygrid@ylo,...,\mygrid@yhi} {%
          (\mygrid@xlo,\ypos) -- ++(-3pt,0)
                              node[anchor=east] {$\ypos$}
        };
    \fi%
}
\newcommand{\remove}[1]{}
\newcommand{\removesafe}[1]{}
\newcommand{\homegac}{h_{\Omega^c}}
\newcommand{\homega}{h_{\Omega}}
\DeclareMathOperator*{\sign}{\text{sgn}}
\newcommand{\xo}{x_0}
\newcommand{\hxop}{h_{x_0^\perp}}
\newcommand{\R}{\mathbb{R}}
\newcommand{\xoomega}{x_0}
\newcommand{\xhat}{\hat{x}}
\newcommand{\ctilde}{\tilde{c}}
\newcommand{\Ctilde}{\tilde{C}}
\newtheorem{theorem}{Theorem} 
\newtheorem{corollary}[theorem]{Corollary}
\title{Compressed Sensing from Phaseless Gaussian Measurements via Linear Programming in the Natural Parameter Space }
\author{
Paul Hand\footnote{Department of Computational and Applied Mathematics, Rice University,  Houston, TX.} \   and Vladislav Voroninski\footnote{Helm.ai, CA.}}
\begin{document}

\maketitle
\abstract{We consider faithfully combining phase retrieval with classical compressed sensing. Inspired by the recent novel formulation for phase retrieval called PhaseMax, we present and analyze SparsePhaseMax, a linear program for phaseless compressed sensing in the natural parameter space. We establish that when provided with an initialization that correlates with an arbitrary $k$-sparse $n$-vector, SparsePhaseMax recovers this vector up to global sign with high probability from $O(k \log \frac{n}{k})$ magnitude measurements against i.i.d. Gaussian random vectors. Our proof of this fact exploits a curious newfound connection between phaseless and 1-bit compressed sensing. This is the first result to establish bootstrapped compressed sensing from phaseless Gaussian measurements under optimal sample complexity.}

\section{Introduction}

Since the foundational compressed sensing results of Candes, Tao, and Donoho over 10 years ago, it is now common in the physical sciences to exploit the sparse structure of natural signals to enable signal recovery from fewer measurements \cite{CRT2005, donoho2001uncertainty}. While sample-optimal compressed sensing from linear measurements is well-developed both mathematically and empirically, enabling compressed sensing in non-linear measurement scenarios has proven to be a substantial challenge. In particular, we consider here phaseless compressed sensing, wherein a sparse vector is to be recovered from magnitude observations only.    In recent years, the phase retrieval problem has garnered significant attention in the applied mathematics community since the development of the PhaseLift algorithm \cite{CESV2011, CSV2013}, which is a convex program that operates in a lifted space of matrices. The PhaseLift methodology has also been applied to recovering sparse vectors from phaseless measurements but was proven to necessarily operate at suboptimal sample complexity in this regime  \cite{li2013sparse}. Due to the computational intractability of working in lifted spaces and the suboptimal sample complexity of PhaseLift in the sparse setting, much work has focused on alternative formulations to phase retrieval, including a recent trend of several non-convex approaches \cite{altminphase, wirtinger, twf, SQW2016, yoninaTAF}.

Concretely, compressed sensing is as follows.  Let $x_0 \in \mathbb{R}^n$ be $k$-sparse and consider a set of known measurement vectors $a_i \in \mathbb{R}^n, i = 1\ldots m$. We consider here the standard idealized setting where $a_i$ are i.i.d Gaussian. It was shown in \cite{CRT2005} and \cite{Donoho}, that $x_0$ can be recovered with high probability via linear programming from $m = O(k \log n)$ linear measurements $\langle a_i, x_0 \rangle, i=1\ldots m$, yielding an order-optimal sample complexity in $k$. 

In this paper, we consider performing compressed sensing and phase retrieval simultaneously. That is, we aim to recover $x_0$ up to a global sign in polynomial time from the magnitude-only observations $|\langle a_i, x_0 \rangle |, i =1\ldots m$.  Naturally, we seek to minimize the number of observations required.  Given an initialization vector $\phi \in \R^n$ that correlates with the signal $x_0$, we consider the following linear program, called SparsePhaseMax:

\begin{equation}\label{PhaseMax}
\begin{array}{ll}
\max & \phantom{-|}\langle \phi, x \rangle - \lambda \|x\|_1 \\[.5em]
\text{s.t.} & -|\langle a_i, x_0 \rangle | \leq \langle a_i, x \rangle \leq |\langle a_i, x_0 \rangle |, \quad  i = 1\ldots  m, \quad x \in \mathbb{R}^n.\\ 

\end{array}
\end{equation}

We prove that if the  initialization vector (also known as an anchor vector) $\phi$  correlates with $\xo$, then SparsePhaseMax with an appropriate choice of $\lambda$ recovers $\xo$ exactly with high probability from $m = O(k \log n/k)$ phaseless Gaussian measurements, provided that $x_0$ is sufficiently sparse. 
We do not make any structural assumptions on $\xo$ beyond sparsity.  In a prior version of the present paper, we proved our main result under the assumption that the magnitude of the nonzero entries of $\xo$ were constant.  For the theorem and proof below, a simple modification removed this assumption. Our main results are as follows:

\begin{theorem} \label{thm:sparsephasemax-smallalpha}
Let $\xo \in \R^n$ be a $k$-sparse vector. Let $a_i \sim \mathcal{N}(0, I_{n \times n})$ be independent for $i = 1 \ldots m$ and assume that $|\langle a_i, x_0 \rangle |, i = 1\ldots m$ have been observed. Let $\phi$ be a unit vector such that $\left \langle \phi, \frac{\xo}{\| \xo \|_2} \right \rangle =: \alpha >0$.  Let $\lambda$  be such that $\frac{1}{2} \frac{\alpha}{\sqrt{k}} < \lambda < \frac{3}{4}  \frac{\alpha}{\sqrt{k}}$, and take ${k < \frac{1}{49} \alpha^2 n}$. Then, provided that $m > \frac{C_1}{\alpha^7} k \log \frac{n}{k}$, $\xo$ is the unique maximizer of SparsePhaseMax \eqref{PhaseMax} with probability at least $1 - C_2 e^{-c \alpha^{-2/5} m^{4/5} k^{1/5} }$.  Here, $C_1, C_2, c$ are positive universal constants. 
\end{theorem}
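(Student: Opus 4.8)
The plan is to prove that $x_0$ is the unique maximizer by ruling out every feasible ascent direction: first reduce to a deterministic inequality over the cone of feasible perturbations, then verify that inequality with a covering argument resting on a ``$1$-bit compressed sensing'' estimate. Throughout write $S:=\operatorname{supp}(x_0)$, $\hat x := x_0/\|x_0\|_2$, and $b_i:=\sign\innersmall{a_i}{x_0}$.

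\textbf{Step 1 (reduction).} The point $x_0$ is feasible; write a competitor as $x_0+w$. By the subgradient inequality for $\|\cdot\|_1$ — using the subgradient equal to $\sign(x_{0,j})$ on $S$ and $\sign(w_j)$ off $S$ — the objective at $x_0+w$ is at most the objective at $x_0$ plus $\innersmall{\phi}{w}-\lambda\innersmall{\sign(x_{0,S})}{w_S}-\lambda\|w_{S^c}\|_1$. A one-line case analysis of $-|\innersmall{a_i}{x_0}|\le\innersmall{a_i}{x_0+w}\le|\innersmall{a_i}{x_0}|$ shows feasibility of $x_0+w$ forces $b_i\innersmall{a_i}{w}\le 0$ for every $i$, i.e.\ $w$ lies in the cone $K:=\{w:\innersmall{b_i a_i}{w}\le 0,\ i=1,\dots,m\}$ generated by the $1$-bit measurements of $x_0$. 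By positive homogeneity it suffices to show that, with the stated probability, every unit $w\in K$ satisfies $\innersmall{\phi}{w}<\lambda\innersmall{\sign(x_{0,S})}{w_S}+\lambda\|w_{S^c}\|_1$; equivalently, the ``bad set'' $T:=\{w:\|w\|_2=1,\ \innersmall{\phi}{w}\ge\lambda\innersmall{\sign(x_{0,S})}{w_S}+\lambda\|w_{S^c}\|_1\}$ is disjoint from $K$.

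\textbf{Step 2 (structure of $T$, and the $1$-bit connection).} If $w\in T$ then $\innersmall{\phi}{w}\le 1$ and $\innersmall{\sign(x_{0,S})}{w_S}\ge-\sqrt k$ give $\|w_{S^c}\|_1\le \tfrac1\lambda+\sqrt k\le 3\sqrt k/\alpha$ (using $\lambda>\tfrac12\alpha/\sqrt k$ and $\alpha\le1$), so $T$ consists of unit vectors that are effectively $O(k/\alpha^2)$-sparse. The same bounds give $\innersmall{\phi}{w}\ge-\lambda\sqrt k>-\tfrac34\alpha$, while $\innersmall{\phi}{w}=-\alpha+\innersmall{\phi}{w+\hat x}\le-\alpha+\|w+\hat x\|_2$; hence $\|w+\hat x\|_2>\alpha/4$, so every $w\in T$ is $\Omega(\alpha)$-far from $-\hat x$, equivalently $\angle(w,x_0)\le\pi-c\alpha$. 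Since $\Pr[\innersmall{b_i a_i}{w}\le 0]=\angle(w,x_0)/\pi$, this yields $\Pr[\innersmall{b_i a_i}{w}\le 0]\le 1-c'\alpha$ uniformly over $w\in T$. The task is now exactly the one-bit-compressed-sensing statement: no effectively-$O(k/\alpha^2)$-sparse unit vector that is $\Omega(\alpha)$-misaligned with $-x_0$ can have $\innersmall{b_i a_i}{w}\le 0$ for all $i$.

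\textbf{Step 3 (covering).} For a fixed such $w$, independence gives $\Pr[\forall i]\le(1-c'\alpha)^m\le e^{-c'\alpha m}$; since $\innersmall{b_i a_i}{w}$ has bounded density near $0$, the relaxed event $\{\innersmall{b_i a_i}{w}\le\xi\ \forall i\}$ still has probability $\le e^{-c'\alpha m/2}$ whenever $\xi\le c''\alpha$. Take an $\eta$-net of the effectively-$O(k/\alpha^2)$-sparse unit ball, of log-cardinality $O(\tfrac{k}{\alpha^2}\log\tfrac{n}{k\eta})$; for $w\in K$ and a net point $w'$ within $\eta$ of $w$ one has $\innersmall{b_ia_i}{w'}\le|\innersmall{a_i}{w'-w}|\le\eta t$ for all $i$, off an event of probability $\le 2m\,e^{-t^2/2}$ (Gaussian tails of $\innersmall{a_i}{w'-w}$). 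A union bound over the net bounds the failure probability by $2m\,e^{-t^2/2}+\exp\!\big(O(\tfrac{k}{\alpha^2}\log\tfrac{n}{k\eta})-c'\alpha m/2\big)$ subject to $\eta t\le c''\alpha$. The hypotheses $\tfrac12\alpha/\sqrt k<\lambda<\tfrac34\alpha/\sqrt k$ and $k<\alpha^2 n/49$ keep every constant above admissible, and optimizing the free parameters $\eta,t$ against the constraint $\eta t\lesssim\alpha$ (which forces $t$ up and hence $\eta$ down, loading $\log(1/\eta)$ into the net exponent) turns these two terms into $m>C_1\alpha^{-7}k\log(n/k)$ and failure probability $C_2 e^{-c\alpha^{-2/5}m^{4/5}k^{1/5}}$.

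\textbf{Main obstacle.} The crux is the uniformization in Step 3. A naive net bound fails because $\innersmall{a_i}{w-w'}$ is not uniformly small in $i$ at the scale $\|w-w'\|$; one must truncate it and then simultaneously balance the truncation level $t$, the net resolution $\eta$ (with the slack $\eta t\lesssim\alpha$ coming from the $\Omega(\alpha)$-separation of $T$ from $-\hat x$ in Step 2), the covering number (driven by the effective sparsity $\Theta(k/\alpha^2)$), and the target deviation probability. This multi-parameter optimization is precisely what produces the non-standard exponents $\alpha^{-7}$ and $\alpha^{-2/5}m^{4/5}k^{1/5}$; secondary technical points are the sharp covering-number estimate for the approximately-sparse vectors of $T$ and checking that the separation obtained in Step 2 survives passage to the net.
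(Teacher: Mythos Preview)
Your Steps 1--2 are essentially the same as the paper's: the subgradient reduction, the sign condition $\langle a_i,x_0\rangle\langle a_i,w\rangle\le 0$ from feasibility, and the effective sparsity $\|w\|_1\lesssim \alpha^{-1}\sqrt{k}\,\|w\|_2$ of any candidate ascent direction all appear verbatim in the paper. Your direct derivation in Step~2 that every $w\in T$ satisfies $\|w+\hat x\|_2>\alpha/4$ is a clean contrapositive of what the paper does; the paper instead first applies the $1$-bit result to force $\|h/\|h\|_2+x_0\|_2\le\delta$, and \emph{then} feeds $\langle x_0,h\rangle\le -(1-\delta)\|h\|_2$ and $\|h_{x_0^\perp}\|_2\le\delta\|h\|_2$ back into the subgradient inequality to obtain $\|h_{\Omega^c}\|_1<0$. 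Either packaging is fine.

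The substantive difference is Step~3. The paper does not run a covering argument at all: it invokes the Plan--Vershynin $1$-bit compressed sensing theorem (their Theorem~2.1) as a black box, which immediately gives that any effectively $s$-sparse unit $h$ with $\langle a_i,x_0\rangle\langle a_i,h\rangle\le 0$ for all $i$ must satisfy $\|x_0+h\|_2\le\delta=\Ctilde_1(s m^{-1}\log(2n/s))^{1/5}$, with $s=O(k/\alpha^2)$. The exponents $\alpha^{-7}$ and $\alpha^{-2/5}m^{4/5}k^{1/5}$ then drop out of substituting $s$ and requiring $\delta<\alpha/20$; no optimization over free parameters is needed.

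Your Step~3 is, in effect, a sketch of a reproof of that theorem, and this is where there is a genuine gap. The line ``$|\langle a_i,w'-w\rangle|\le\eta t$ off an event of probability $\le 2m\,e^{-t^2/2}$ (Gaussian tails of $\langle a_i,w'-w\rangle$)'' treats $w'-w$ as fixed, but $w$ is the adversarial point you are trying to rule out, so the relevant quantity is $\sup_i\sup_{\|v\|_2\le\eta,\ \|v\|_1\lesssim\sqrt{s}}|\langle a_i,v\rangle|$, which is not controlled by a pointwise Gaussian tail. You flag this in your ``Main obstacle'' paragraph, but ``truncate and balance'' does not resolve it: a single-scale net with this slack does not produce the $1/5$ exponent, and Plan--Vershynin's actual proof uses a multi-scale tessellation argument rather than the one-shot covering you describe. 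So either cite their theorem (as the paper does, which collapses Step~3 to two lines), or commit to reproducing their multi-scale machinery; the current sketch asserts the target exponents without deriving them.
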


If the correlation is bounded below by a fixed constant, we have the following.

\begin{corollary} \label{cor:sparsephasemax-smallalpha}
Fix $\gamma\in \R$ such that $0 < \gamma \leq 1$.  Let $\xo \in \R^n$ be a $k$-sparse vector. Let $a_i \sim \mathcal{N}(0, I_{n \times n})$ be independent for $i = 1 \ldots m$. and assume that $|\langle a_i, x_0 \rangle |, i = 1 \ldots m$ have been observed.  Let $\phi$ be a unit vector such that $\left \langle \phi, \frac{\xo}{\| \xo \|_2} \right \rangle=: \alpha > \gamma$, and take $k < \frac{1}{49}\alpha^2 n$.  Let $\lambda$ be such that $\frac{1}{2} \frac{\alpha}{\sqrt{k}} < \lambda < \frac{3}{4}  \frac{\alpha}{\sqrt{k}}$. Then $\xo$ is the unique maximizer of SparsePhaseMax \eqref{PhaseMax} with probability at least $1 - C_2 e^{-c  k}$, for $m = O(k \log \frac{n}{k})$.  Here, $C_1, C_2, c$ are positive universal constants. 
\end{corollary}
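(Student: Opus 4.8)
The plan is to obtain Corollary~\ref{cor:sparsephasemax-smallalpha} as an immediate specialization of Theorem~\ref{thm:sparsephasemax-smallalpha}, using only that the correlation is now pinned between the fixed constant $\gamma$ and $1$ (the bound $\alpha \le 1$ being automatic, since $\phi$ and $\xo/\|\xo\|_2$ are unit vectors). Every structural hypothesis of the theorem is inherited verbatim from the hypotheses of the corollary: $\xo$ is $k$-sparse, the $a_i$ are i.i.d.\ $\mathcal{N}(0, I_{n\times n})$, $\phi$ is a unit anchor with positive correlation $\alpha := \langle \phi, \xo/\|\xo\|_2\rangle > \gamma$, the parameter $\lambda$ lies in $\big(\tfrac12 \alpha/\sqrt k,\ \tfrac34 \alpha/\sqrt k\big)$, and $k < \tfrac1{49}\alpha^2 n$. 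Hence Theorem~\ref{thm:sparsephasemax-smallalpha} applies with no modification, and all that remains is to massage its sample-complexity threshold and its failure-probability bound into the forms claimed.

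First I would fix the number of measurements. Letting $C_1$ denote the constant of Theorem~\ref{thm:sparsephasemax-smallalpha}, take any $m$ with $m \ge \tfrac{C_1}{\gamma^7}\, k\log\tfrac nk$; this is precisely what $m = O(k\log\tfrac nk)$ means here, the implied constant depending only on $\gamma$. Since $\alpha > \gamma > 0$ gives $\tfrac{C_1}{\alpha^7} < \tfrac{C_1}{\gamma^7}$, such an $m$ automatically satisfies the strict hypothesis $m > \tfrac{C_1}{\alpha^7}\, k\log\tfrac nk$ of the theorem (and enlarging $m$ only sharpens the conclusion). Next I would convert the probability estimate. Theorem~\ref{thm:sparsephasemax-smallalpha} then guarantees that $\xo$ is the unique maximizer of \eqref{PhaseMax} with probability at least $1 - C_2 e^{-c\,\alpha^{-2/5} m^{4/5} k^{1/5}}$. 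Because $\alpha \le 1$ we have $\alpha^{-2/5} \ge 1$, so the exponent is at most $-c\, m^{4/5} k^{1/5}$. The sparsity hypothesis forces $n/k > 49$, hence $\log\tfrac nk > \log 49 > 1$, so from $m \ge \tfrac{C_1}{\gamma^7} k\log\tfrac nk \ge \tfrac{C_1\log 49}{\gamma^7}\, k$ we get $m^{4/5} k^{1/5} \ge \big(\tfrac{C_1\log 49}{\gamma^7}\big)^{4/5} k \ge (C_1\log 49)^{4/5}\, k$ for every $\gamma \in (0,1]$. Substituting, the failure probability is at most $C_2 e^{-c' k}$ with $c' := c\,(C_1\log 49)^{4/5}$, a positive universal constant, which is exactly the assertion of the corollary.

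There is no genuine obstacle here; the only care needed is bookkeeping. One must (i) observe $\alpha \le 1$ to neutralize the $\alpha^{-2/5}$ factor rather than let it degrade the bound, (ii) invoke the sparsity constraint $k < \tfrac1{49}\alpha^2 n$ to keep $\log\tfrac nk$ bounded away from zero so that $m^{4/5}k^{1/5}$ dominates a clean multiple of $k$, and (iii) track that the only $\gamma$-dependence surviving into the statement is the implied constant in $m = O(k\log\tfrac nk)$, while the exponent constant $c'$ can be taken independent of $\gamma$. No probabilistic estimates beyond those already established for Theorem~\ref{thm:sparsephasemax-smallalpha} are required.
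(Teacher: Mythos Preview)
Your proposal is correct and follows essentially the same route as the paper: invoke Theorem~\ref{thm:sparsephasemax-smallalpha} directly, then simplify the exponent using $\alpha \le 1$ and the fact that $m$ is at least a constant multiple of $k$. The paper compresses this to one line, asserting $m \ge k$ outright to obtain $m^{4/5}k^{1/5} \ge k$, whereas you take the slightly longer but more explicit path of extracting a uniform lower bound on $m/k$ from the sparsity constraint $k < \tfrac{1}{49}\alpha^2 n$ (which forces $\log\tfrac{n}{k} > \log 49$); both arrive at the same conclusion with the same ingredients.
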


Thus, pre-initialized phaseless compressed sensing with random Gaussian measurement vectors is possible under optimal sample complexity.

\section{Comparison to prior work}

There has been much previous work on the phaseless compressed sensing problem.  See \cite{cpremprical} for an empirical exploration.  In terms of theory, \cite{li2013sparse} established that $x_0$ can be recovered modulo sign via an L1-enhanced PhaseLift formulation when $m = O(k^2 \log n)$. Simultaneously, \cite{li2013sparse} established that this result is sharp. Thus, there is a substantial gap between optimal sample complexity achievable with PhaseLift in the sparse setting and the information theoretic lower bound of $O(k)$ measurements required for injectivity over $k$-sparse vectors of Gaussian magnitude measurements \cite{voroninski2016strong}.

Much subsequent work on the sparse phase retrieval problem has followed but has been unable to break through the algorithmic barrier described above \cite{oymak2015simultaneously, GAMP, Yon}. There are also related papers which consider measurement vectors that have a specific algebraic structure \cite{DustinSparse, iwen2015robust, jaganathan2013sparse} or very sparse measurement vectors \cite{Babak}. In these settings, it is possible to achieve optimal sample complexity. In fact, if we allow ourselves arbitrary choice of sensing vectors, optimal sample complexity follows simply by choosing the measurement matrix as a factorized product of Gaussian matrices of appropriate dimensions and applying PhaseLift followed by standard basis pursuit. The issue with these approaches is that it is not realistic to expect such flexibility in measurement vector design in applied science scenarios, and moreover the assumed algebraic measurement structure avoids the central difficulty of sparse phase retrieval.  Meanwhile, the generic Gaussian model is much closer to modeling the physical reality of Fourier diffraction measurements in X-ray crystallography and other areas of diffraction imaging. That is, once the Gaussian setting is understood, extensions to the Fourier setting are very likely to follow via similar methods. 

A paper by Li et al. on Truncated Wirtinger Flow (TWF) \cite{cai2015optimal} examines the case of unstructured Gaussian measurement vectors for sparse phase retrieval in the noisy setting. The authors of \cite{cai2015optimal} combine initialization using $O(k^2 \log n)$ Gaussian magnitude measurements with a non convex optimization scheme in the natural parameter space, establishing theoretical guarantees for the latter which also require $O(k^2 \log n)$ measurements. Our approach also relies on an initialization, which we utilize as an anchor vector, but differs in several important aspects. Firstly, unlike TWF, which consists of a non-convex objective minimization, we utilize \emph{convex} programming in the natural parameter space. In fact our approach consists of a linear program closely related to the recent PhaseMax program for regular phase retrieval \cite{phasemaxJustin, phasemax, phasemaxelementary}. Secondly, while theoretical guarantees for TWF require $O(k^2 \log n)$ measurements for non-convex optimization after initialization, we show that our approach succeeds post-initializion at the optimal sample complexity of $O(k \log \frac{n}{k})$. Thirdly, our theoretical arguments are far simpler than those employed in non-convex approaches to phase retrieval and our algorithmic formulation has only one parameter, for which a range of values is admissible. To the best of our knowledge, our main result is the first of its kind with respect to any of the three prior points in the Gaussian phaseless compressed sensing setting.


There is some conflicting evidence regarding the optimal sample complexity of phaseless compressed sensing. For instance, while PhaseLift is suboptimal in the sparse setting, the authors of \cite{voroninski2016strong} showed that minimizing the L1 norm over Gaussian magnitude measurements, if it were possible in polynomial time, would achieve sample-optimal phaseless compressed sensing. At the same time, Sparse PCA \cite{berthet2013optimal} and Planted Clique \cite{barak2016nearly}, which may be intimately related to phaseless compressed sensing, have resisted attempts to break past barriers analogous to the $O(k^2 \log n)$ Gaussian phaseless measurement regime. Our work here meanwhile indicates that it is possible to recover in polynomial time a sparse vector from $O(k \log \frac{n}{k})$ Gaussian magnitude measurements, provided that we start with an initialization that sufficiently correlates with the true solution. Subsequent to the release of this work, \cite{wang2016sparse} showed that a Truncated Amplitude Flow can also recover a sparse vector from $O(k)$ phaseless Gaussian measurements, provided an appropriate initialization is known.  Similarly, \cite{soltanolkotabi2017structured} showed similar results for the Projected Wirtinger Flow for signals with a variety of structural assumptions, including sparsity.  We note that the best known initialization scheme for sparse phase retrieval is only known to provide meaningful initializations at the sample complexity of $\Omega(k^2 \log n)$ Gaussian measurements \cite{cai2015optimal}. It is a very interesting open problem to either come up with a sample-optimal initialization scheme for sparse phase retrieval, which by appealing to our main theorem is equivalent to sample-optimal sparse phase retrieval, or to prove that there are fundamental algorithmic limitations that prevent such schemes, which may very well be the case in light of recent trends in theoretical computer science \cite{barak2016nearly}. In either case, in practice there are often many problem-specific structural assumptions to be made about a signal, and thus initialization for sparse signals may be obtained via other means that may not require $O(k^2 \log n)$ magnitude measurements. 

Lastly, the proof of our main theorem illustrates an intriguing connection between phaseless compressed sensing and 1-bit compressed sensing. In fact we rely on a crucial theorem in \cite{yaniv} to analyze the optimality conditions of SparsePhaseMax. A duality between phaseless and phase-only measurements has been observed before in a different setting, by the authors of \cite{bandeira2014saving}, in their analysis of injectivity of phase retrieval. Specifically, they established that measurement vectors lead to injective measurements modulo phase if and only if they lead to injective measurements modulo magnitude. 

\section{Proof}

In the course of proving our main theorem, we establish a connection between phaseless and 1-bit compressed sensing. Specifically, the optimality conditions of SparsePhaseMax involve the signs of the quantities $\langle a_i,x_0 \rangle$ and $\langle a_i,h \rangle$ where $h$ is a perturbation about $\xo$, and we use a key theorem from \cite{yaniv} to control the set of admissible ascent directions. This theorem states that if two approximately $s$-sparse vectors in $\R^n$ are on the same side of $O(s \log(2n/s))$ random hyperplanes, then these two vectors are approximately equal with high probability.  We state it here in full for convenience:


\begin{theorem}[Theorem 2.1 in \cite{yaniv}] \label{thm:yaniv}
Let $n,m,s > 0$ and set $\delta = \Ctilde_1\left( \frac{s}{m} \log(2n/s) \right)^{1/5}.$  Let $a_i \sim \mathcal{N}(0, I_{n \times n})$ be independent for $i = 1 \ldots m$.  Then with probability at least $1 - \Ctilde_2 \exp(-\ctilde \delta m)$, the following
holds uniformly for all $x, \xhat \in \R^n$ that satisfy $\|x\|_2 = \|\xhat\|_2 = 1$, $\|x\|_1 \leq \sqrt{s}$, and $\|\xhat\|_1 \leq \sqrt{s}$, for $s \leq n$ :
\[
\langle a_i, \xhat \rangle \langle a_i, x \rangle \geq 0 ,  \ \  i=1\ldots m  \quad \implies \quad  \| \xhat - x \|_2 \leq \delta.
\]
Here, $\Ctilde_1, \Ctilde_2, \ctilde$ are positive universal constants. 
\end{theorem}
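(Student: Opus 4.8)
The plan is to establish the equivalent \emph{tessellation} statement: with high probability the $m$ random hyperplanes $\{y : \inner{a_i}{y}=0\}$ cut the set of effectively $s$-sparse unit vectors
\[
K := \{\, x \in \R^n : \norm{x}_2 = 1,\ \norm{x}_1 \leq \sqrt{s}\,\}
\]
into cells of Euclidean diameter at most $\delta$. Since the hypothesis $\inner{a_i}{\xhat}\inner{a_i}{x}\ge 0$ for all $i$ says exactly that $x$ and $\xhat$ lie in a common cell, such a tessellation bound immediately yields $\norm{\xhat-x}_2\le\delta$. I would prove the contrapositive in uniform form: with high probability, \emph{every} pair $x,\xhat\in K$ with $\norm{x-\xhat}_2>\delta$ is strictly separated by at least one hyperplane.

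I would begin with the single-pair estimate. Fix $x,\xhat\in K$ with $\norm{x-\xhat}_2>\delta$ and let $\theta=\angle(x,\xhat)$. By rotational invariance of the Gaussian, each hyperplane separates the pair independently with probability $\theta/\pi$, and since $\norm{x-\xhat}_2=2\sin(\theta/2)$ we have $\theta\gtrsim\delta$. Hence the number of separating hyperplanes $N(x,\xhat):=\#\{i:\inner{a_i}{x}\inner{a_i}{\xhat}<0\}$ is a binomial variable with mean $\gtrsim\delta m$, and a Chernoff bound gives $\Pr[N(x,\xhat)=0]\le(1-\theta/\pi)^m\le e^{-c\delta m}$. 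The whole task is to make this survive a supremum over the uncountable family of far-apart pairs in $K$. The relevant complexity is the Gaussian mean width of $K=\sqrt{s}\,B_1^n\cap B_2^n$, which obeys $w(K)^2\lesssim s\log(2n/s)$; this is precisely the origin of the $s\log(2n/s)$ factor in $\delta$.

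The main obstacle is that the separation event is \emph{discontinuous} in $(x,\xhat)$: a point lying within a hair of a hyperplane flips sides under an arbitrarily small perturbation, so a naive $\epsilon$-net union bound fails because favorable sign patterns do not transfer from net points to their neighbors (crudely, pointwise transfer would demand a margin $\gtrsim\epsilon\max_i\norm{a_i}\asymp\epsilon\sqrt{n}$, which is far too lossy). I would resolve this with a margin device: introduce a parameter $t>0$ and replace the hard separation indicator by a soft, $t^{-1}$-Lipschitz lower bound that only credits hyperplanes separating a point by a margin $\gtrsim t$ in $\inner{a_i}{\cdot}$. Such margin-separations are stable under perturbation and hence transfer across the net, while the deviation of the resulting empirical average from its mean can be controlled \emph{uniformly} over $K$ by a symmetrization-and-contraction bound scaling like $t^{-1}w(K)/\sqrt{m}$. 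Requiring the soft count to remain positive then reads $\delta\gtrsim t + (\text{deviation})$, since the margin biases the mean downward by $O(t)$ through Gaussian anti-concentration of $\inner{a_i}{x}$ near $0$.

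Finally I would optimize the margin. Balancing the downward bias $O(t)$, the uniform deviation term carrying the factors $t^{-1}$ and $w(K)\asymp\sqrt{s\log(2n/s)}$, and the tail scale forced by the target failure probability against the single-pair exponent $\delta m$, and solving the resulting multi-term trade-off, yields the stated relation $\delta^5\asymp\frac{s}{m}\log(2n/s)$, i.e.\ $\delta=\Ctilde_1\big(\tfrac{s}{m}\log(2n/s)\big)^{1/5}$, with surviving failure probability $\Ctilde_2\exp(-\ctilde\delta m)$. I expect this margin optimization---reconciling the stability needed to defeat the discontinuity against the bias and complexity it introduces---to be the genuinely delicate step, and the fractional exponent $1/5$ is its signature rather than an artifact.
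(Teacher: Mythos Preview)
The paper does not prove this statement at all: Theorem~\ref{thm:yaniv} is quoted verbatim from \cite{yaniv} (Plan--Vershynin) and invoked as a black box inside the proof of Theorem~\ref{thm:sparsephasemax-smallalpha}. There is therefore no ``paper's own proof'' to compare your proposal against.

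That said, your sketch is a faithful outline of the original argument in \cite{yaniv}: the tessellation reformulation, the single-pair separation probability $\theta/\pi$, the mean-width complexity $w(K)^2 \lesssim s\log(2n/s)$ for $K = \sqrt{s}\,B_1^n \cap S^{n-1}$, the soft-margin / Lipschitz surrogate to defeat the discontinuity of the sign function, the symmetrization-and-contraction control of the uniform deviation, and the final balancing that produces the $1/5$ exponent. If your goal were to reproduce the Plan--Vershynin proof, this is the right skeleton; the step you flag as delicate (the multi-term optimization yielding $\delta^5 \asymp \frac{s}{m}\log(2n/s)$) is indeed where the work lies. But for the present paper nothing beyond the theorem statement is required.
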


We now proceed to establish our main result:

\begin{proof}[Proof of Theorem \ref{thm:sparsephasemax-smallalpha}]
Without loss of generality, let $\|\xo\|_2 = 1$, and let $\Omega$ be the support of $x_0$.   To establish that $x_0$ is the unique global maximizer,  consider a feasible $\xo+h$ such that $\langle \phi, \xo + h \rangle - \lambda \| \xo + h \|_1 \geq \langle \phi, \xo\rangle - \lambda \| \xo \|_1$.  We will show that $h = 0$.  Write $h = \homega + \homegac$, where these terms have support in $\Omega$ and $\Omega^c$, respectively. Because $\xo +h$ is feasible, we have
\begin{align}
\langle a_i, \xo \rangle \langle a_i, h \rangle \leq 0, \ \ i =1 \ldots m. \label{optimality-conditions}
\end{align}

Because $\sign(\xoomega) + \sign(\homegac)$ is a subgradient of $\|\cdot\|_1$ at $\xo$, we have
\begin{align*}
\langle \phi, h \rangle &- \lambda \|x_0\|_1 - \lambda \langle \sign(\xoomega) + \sign(\homegac), h \rangle \geq - \lambda \| x_0\|_1.
\intertext{Thus, }
\langle \phi, h\rangle  &\geq \lambda \langle \sign(\xoomega), \homega \rangle + \lambda \langle \sign \homegac,  \homegac \rangle\\
 & = \lambda \langle \sign(\xoomega), \homega \rangle + \lambda \| \homegac\|_1.
\end{align*}
By writing $h = \langle \xo, h \rangle \xo + \hxop$, where $\hxop$ is the projection of $h$ onto the orthogonal complement of span$(\{\xo\})$, we have
\begin{align}
\|\homegac\|_1 &\leq  - \langle \sign(\xoomega), \homega \rangle + \frac{1}{\lambda} \langle \phi, \xo \rangle \langle \xo, h \rangle + \frac{1}{\lambda} \langle \phi, \hxop \rangle \notag\\
&= - \langle \sign(\xoomega), \homega \rangle + \frac{\alpha}{\lambda} \langle \xo, \homega\rangle +  \frac{1}{\lambda} \langle \phi, \hxop \rangle \notag\\
&\leq \sqrt{k} \| \homega\|_2 + \frac{\alpha}{\lambda} \langle \xo, \homega\rangle + \frac{1}{\lambda} \| \hxop\|_2 \label{lowerbound-homegac} \\
&\leq \sqrt{k} \| \homega\|_2 + \frac{\alpha}{\lambda} \|\homega\|_2 +  \frac{1}{\lambda} \| \hxop\|_2 \notag \\
&\leq \Bigl(1 + \frac{\alpha}{\xi} + \frac{1}{\xi} \Bigr) \sqrt{k} \|h\|_2 \notag \\
&\leq \Bigl(\frac{\xi}{\xi} + \frac{1}{\xi}  + \frac{1}{\xi}\Bigr) \sqrt{k} \|h\|_2 \notag\\
&\leq \frac{11/4}{\xi} \sqrt{k} \| h\|_2, \label{lowerbound-homegac-preyaniv}
\end{align}
where the second line follows by definition of $\alpha$ and because $\xo$ is supported on $\Omega$; the third line follows by Cauchy-Schwarz, the assumption that $|\Omega|=k$, and the assumption that $\|\phi\|_2=1$; the fourth line follows because $\|\xo\|_2=1$, the fifth line follows by definition of $\xi := \lambda \sqrt{k}$, the sixth line follows by $\alpha \leq 1$, and the last line follows by the assumption that $\xi < \frac{3}{4} \alpha \leq \frac{3}{4}$. 
We now compute
\begin{align*}
\|h\|_1 &= \| \homegac\|_1 + \|\homega\|_1\\
&\leq \frac{11/4}{\xi} \sqrt{k} \| h\|_2 + \sqrt{k} \| \homega\|_2 \\
&\leq \Bigl( \frac{11/4}{\xi} + 1 \Bigr) \sqrt{k} \|h\|_2\\
&\leq  \frac{3.5}{\xi} \sqrt{k} \|h\|_2,
\end{align*}
where the second line follows by \eqref{lowerbound-homegac-preyaniv} and $|\Omega|=k$,  and the last line follows by $\xi < \frac{3}{4} \alpha \leq \frac{3}{4}$.
Assuming $h \neq 0$, 
we arrive at 
\[
\frac{\|h\|_1}{\|h\|_2} \leq \frac{3.5}{\xi} \sqrt{k}.
\]

Now, if $m > \frac{3.5^2\cdot \Ctilde_1^5}{(\alpha/20)^5 \xi^2} k \log \frac{2 n \xi^2}{3.5^2 k}$ then  $\delta := \Ctilde_1(\frac{s}{m} \log \frac{2n}{s})^{1/5}<\frac{\alpha}{20}$, where $s = \frac{3.5^2}{\xi^2} k$.  Further, as $k <  \frac{1}{49} \alpha^2 n $ and $\xi > \frac{\alpha}{2}$, we have that $s < n$ and that there exists a universal constant $C_1$ such that $m > \frac{C_1}{\alpha^7} k \log \frac{n}{k}$  implies  $m > \frac{3.5^2\cdot \Ctilde_1^5}{(\alpha/20)^5 \xi^2} k \log \frac{2 n \xi^2}{3.5^2 k}$ and thus $\delta < \frac{\alpha}{20}$. 
By Theorem~\ref{thm:yaniv},  the feasibility conditions \eqref{optimality-conditions} imply that there exists an event $E$ with probability 
\begin{align}
\mathbb{P}(E) \geq 1 - \Ctilde_2 e^{-\ctilde \delta m} &= 
1 - \Ctilde_2 \exp \left[-\frac{3.5^{2/5} \ctilde \Ctilde_1}{\xi^{2/5}} m^{4/5} k^{1/5} \log^{1/5} \Bigl(\frac{8 n \xi^2}{49 k} \Bigr) \right] \notag\\
&\geq 
1 - \Ctilde_2 \exp \left[-\frac{3.5^{2/5} \ctilde \Ctilde_1}{(3/4)^{2/5}} \alpha^{-2/5} m^{4/5} k^{1/5} \log^{1/5} \Bigl(\frac{ 2 n\alpha^2}{ 49 k} \Bigr) \right] \notag\\
&\geq 
1 - \Ctilde_2 \exp \left[-\frac{3.5^{2/5} \ctilde \Ctilde_1}{(3/4)^{2/5}} \alpha^{-2/5} m^{4/5} k^{1/5} (\log 2)^{1/5} \right], \notag
\end{align}
on which $\left\| \xo + \frac{h}{\|h\|_2} \right\|_2 \leq \delta$.  To get the inequalities above, we use $\xi < \frac{3}{4} \alpha$, $\xi > \frac{1}{2} \alpha$, and the assumption that $\frac{n \alpha^2}{k} > 49$.  Hence, on $E$,
\begin{align}
\langle\xo,  h \rangle &\leq -(1-\delta) \|h\|_2  \text{ and  }\| \hxop\|_2 \leq \delta \|h\|_2. \label{boundsonE}
\intertext{By \eqref{lowerbound-homegac}, }
\| \homegac\|_1 &\leq  \sqrt{k} \| \homega\|_2 + \frac{\alpha}{\lambda} \langle \xo, \homega\rangle + \frac{1}{\lambda} \| \hxop \|_2 \notag\\
&\leq \sqrt{k} \| \homega\|_2 - \frac{\alpha}{\lambda} (1-\delta) \| h \|_2 + \frac{\delta}{\lambda} \| h \|_2 \notag\\
&\leq \Bigl( 1 - \frac{\alpha}{\xi} (1-\delta) + \frac{\delta}{\xi} \Bigr) \sqrt{k} \| h\|_2 \notag \\
&\leq \Bigl( 1 - \frac{4}{3} (1-\delta) + \frac{\alpha}{20 \xi}  \Bigr) \sqrt{k} \|h\|_2 \notag \\
&\leq \Bigl( -\frac{1}{3} + \frac{4}{3}\cdot \frac{1}{20} + \frac{1}{10}  \Bigr)  \sqrt{k} \| h\|_2 \notag \\
&\leq -0.16 \sqrt{k} \|h\|_2, \notag
\end{align}
where the second line follows from \eqref{boundsonE}, the third line follows from the definition of $\xi$, the fourth line follows from $\xi < \frac{3}{4} \alpha$ and $\delta < \frac{\alpha}{20}$, the fifth line follows from $\delta < \frac{1}{20}$ and $\xi > \frac{1}{2} \alpha$.   Hence, $\|\homegac\|_1=0$, from which we also conclude $\|h\|_2 = 0$.

\end{proof}

Corollary \ref{cor:sparsephasemax-smallalpha} follows immediately from Theorem \ref{thm:sparsephasemax-smallalpha} by the probability bound $1 - C_2 e^{-c \alpha^{-2/5} m^{4/5} k^{1/5} }  \geq 1 - C_2 e^{-c k}$, where we use  $\alpha \leq 1$ and $m \geq k$.

\subsection*{Acknowledgements}
PH and VV  thank Thang Huynh for discussions and improvements to the proof.  PH acknowledges funding by the grant NSF DMS-1464525. 
\bibliographystyle{plain}
\bibliography{refs}

\end{document}